\newcommand{\boldm}[1] {\mathversion{bold}#1\mathversion{normal}}
\newtheorem{theorem}{Theorem}
\newtheorem{corollary}{Corollary}
\newtheorem{lemma}{Lemma}
\newtheorem{prop}{Proposition}
\newtheorem{assumption}{Assumption}
\begin{document}

\title{Performance Analysis of Joint Pairing and Mode Selection in D2D Communications with FD Radios}


\author{Simin~Badri,
	Mansour~Naslcheraghi,
	and~Mehdi~Rasti
        
\IEEEcompsocitemizethanks{\IEEEcompsocthanksitem The authors are with the Department of Computer and IT Engineering, Amirkabir University of Technology, Tehran, Iran. Email:\{s.badri, m.naslcheraghi\}@ieee.org,
	        	rasti@aut.ac.ir\protect\\}
}

\maketitle
\thispagestyle{empty}
\begin{abstract}
In cellular-D2D networks, users can select the communication mode either direct and form D2D links or indirect and communicate with BS. In former case, users should perform pairing selection and choose their pairs. The main focus in this paper is proposing an analytical framework by using tools from stochastic geometry to address these two issues, i.e. i) mode selection for the user devices to be established in either cellular or D2D mode, which is done based on received power from BS influenced by a bias factor, and ii) investigation of choosing $n^{\rm{th}}$-nearest neighbor as the serving node for the receiver of interest, by considering full-duplex (FD) radios as well as half-duplex (HD) in the D2D links. The analytic and simulation results demonstrate that even though the bias factor determines the throughput of each mode, it does not have any influence on the system sum throughput. Furthermore, we demonstrate that despite of suffering from self-interference, FD-D2D results in higher system sum throughput as well as higher coverage probability in comparison to its counterpart, namely purely HD-D2D network.
\end{abstract}

\begin{IEEEkeywords}
D2D, half-duplex, full-duplex,  Poisson Point Process (PPP), Stochastic Geometry.
\end{IEEEkeywords}

\IEEEpeerreviewmaketitle

\section{Introduction}\label{introduction}
Tremendous growing demand in mobile data traffic is one of the most important challenges in future cellular networks. Due to the spectrum scarcity, the cellular infrastructure confronted with major challenges providing efficient methods of resource utilization in order to accommodate exploding demands. Direct D2D and In-Band-Full-Duplex (IBFD) communications are proposed as two promising technologies in 5G in order to enhance spectrum efficiency. By reusing the spectrum of cellular users, D2D users bypass the BS and form direct link; thus the spectral efficiency can be improved. In addition to improving spectral efficiency by using D2D communication, IBFD can increase the spectral efficiency and throughput nearly up to double. FD enables two devices to transmit and receive in a single frequency band at the time and thus improves the attainable spectral efficiency. The idea of exploiting IBFD was raised in recent years, however, due to the presence of self-interference (SI), it was not considered to be applicable. Nevertheless, with the advancements of technology in analog and digital SI cancellation techniques as well as the antenna designs, the SI issue is now significantly resolved \cite{A}. To be practical, in addition to exploiting SI cancellation, FD communication is considered to be applied in short range distances because of less required transmit power and SI. Due to having short range distances, D2D communication can be used with FD to further improve the network throughput.

In order to efficiently exploit the potentials of D2D, there are some challenges that should be properly addressed. Two of the most important ones are cellular-D2D mode selection and pairing. In former, users can choose between cellular and D2D mode and in the latter users should select their pairs for communication in D2D mode. In studies \cite{B,C,D}, the mode selection between cellular and D2D communication is investigated for uplink transmission in a HD network. The throughput for a FD network is analyzed in \cite{E, F, G}. In all these works, the location of receiver in a D2D pair is modeled by either uniform distribution, fixed distance or nearest neighbor distribution. However, one of the fundamental requirements of D2D networks is the possibility of having multiple proximate devices as a serving device that neither of mentioned works considered this. In \cite{H} authors investigate the availability of content in $k$th closest device to a given device where the location of D2D devices is modeled as Poisson cluster process (PCP) which suits for finite networks.

The main focus of this work is considering the aforementioned issues namely, joint cellular-D2D mode selection and D2D pairing, and to the best of our knowledge, there is no existing work in the literature to provide performance analysis for joint cellular-D2D mode selection and pairing. Details along with the main contributions are as follows, 

\begin{itemize}
	\item The proposed system contains two different phases; (i) Cellular/D2D mode selection in which nearest BS received power is considered as the selection criteria. In particular, a user device chooses cellular mode if the biased nearest BS's received power can meet some predefined system threshold. Different from the studies \cite{B} and \cite{D}, in which cellular uplink spectrum is utilized for the mode selection, in this work we utilize cellular downlink spectrum for cellular/D2D mode selection. Also, these works consider only cellular/D2D mode selection for a HD network while we perform both cellular/D2D mode selection as well as pairing for a hybrid HD/FD network.
	
	\item We propose a D2D pairing scheme, in which that pairing is done based on $n^{\rm{th}}$-nearest neighbor, which is more challenging where we aim to develop analysis for an infinite network, i.e. Poisson Point Process (\textbf{PPP}). 	
\end{itemize} 

The remainder of this paper are organized as follows. In Section \ref{System Model}, the system model, assumptions, mode selection and pairing schemes are described. The coverage probability and spectral efficiency for both cellular mode and D2D mode are analyzed in section \ref{CovProb_Th}. The results and discussion of simulation are provided in Section \ref{results}, and Section \ref{Conclusion} concludes this work.

\section{System Model}\label{System Model}
Consider a cellular-D2D overlaied network and assume downlink spectrum for cellular communications. The BSs are modeled by an independent \textbf{PPP} {\boldm $\Phi_{\rm b}$}$=\{b_i; i = 1,2,3,... \}$ with intensity of $\lambda_{\rm b}$  in $\mathbb{R}^2$. The location of user equipment (UE) follows an independent \textbf{PPP} {\boldm $\Phi_{\rm u}$}$=\{u_i;i = 1,2,3,...\}$ with intensity of $\lambda_{\rm u}$ in $\mathbb{R}^2$. We perform the analysis based on a typical UE, which is cellular UE (CUE) for downlink cellular communication and D2D UE (DUE) for D2D link. We use a general power-law path loss model with path loss exponent $\alpha>2$. All BSs and UEs choosing D2D mode will transmit with fixed power $P_{\rm b}$ and $P_{\rm d}$, respectively. We consider Rayleigh fading, i.e, the channel gain between two points $x, y \in \mathbb{R}^2$, $h(x,y)$ which is exponentially distributed with mean one denoted by $h(x,y) \sim \exp(1)$. Fig. \ref{Topology} illustrates the system model as explained in what follows.

\begin{figure}
	\centering
	\includegraphics[width=0.5 \textwidth]{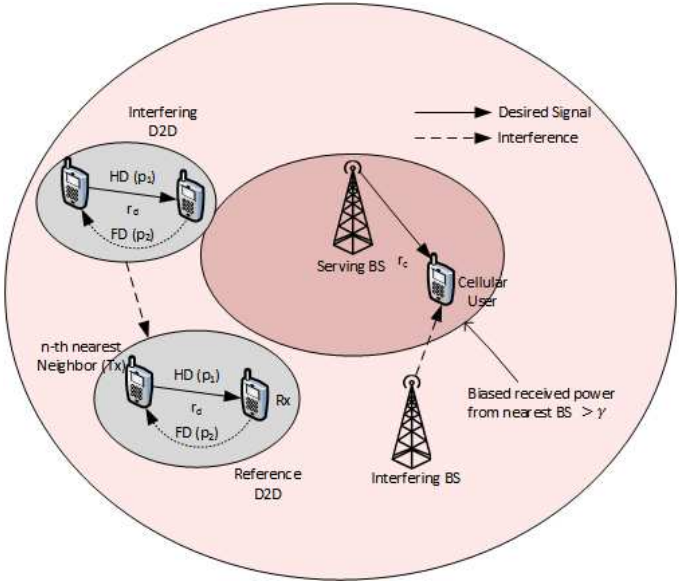}
	\caption{Illustration of the system model.}{\vspace{-4mm}}
	\label{Topology}
\end{figure}

\subsection{Cellular/D2D Mode Selection}
Mode selection for UEs is performed based on biased received power, in which a UE connects to its nearest BS if the received power from this BS exceeds a predefined threshold, $\gamma$. Specifically, a UE associates to its nearest BS if 
\begin{equation}\label{User_association_condition}
P_{r_c} = k P_{\rm b} h_c r_c^{-\alpha} > \gamma,
\end{equation}
where $P_{\rm b}$, $h_{\rm c}$, $r_{\rm c}$ and $\alpha$ are the transmit power of nearest BS, the power fading at distance  
$r_{\rm c}$, the distance between a node and its nearest BS and the path loss exponent, respectively. $k$ is a bias parameter used to tune the density of CUEs and DUEs across the entire network, i.e. $k \to 0$ means that each UE is forced to choose D2D mode. By adjusting $ k \to \infty$, each UE has this chance to associate with its nearest BS. This tuning parameter has influence on data offloading and resource allocation. That is by setting $k$ to larger values most UEs tend to D2D communication and therefore offload cellular traffic and free the resources to cover more CUEs. On the other side, setting $k$ to lower values makes it less probable for UEs to choose D2D mode as the communication link and thereby mitigates the throughput of D2D network. In cellular case, the distance between the typical UE and the $i^{th}$ BS is denoted with $r_{{\rm c}i}$. Among these distances, we denote the distance between the typical UE and its nearest BS with $r_{\rm c}$. It is shown that the random variable $r_{\rm c}$ is Rayleigh distributed \cite{I} and the corresponding probability density function (PDF) is given by
\begin{equation}\label{PDF_r_c}
	f_{r_{\rm c}}(r) = 2\pi \lambda_{\rm b} r e^{-\pi \lambda_{\rm b} r^2}.
\end{equation}

\subsection{Pairing}
If after using a bias factor mode selection explained in previous subsection, a given UE looses the option connecting to its nearest BS, we assume it connects to the $n^{\rm{th}}$-nearest neighbor and forms a D2D link. Although D2D discovery and its mechanisms could be considered as a part of pairing, we leave it as our future work. However, various D2D discovery mechanisms can be adopted to discover D2D pairs such as what have been suggested in \cite{J}. Let us denote the distance between the typical UE and its transmitter by $r_{\rm d}$. In this regard,
the PDF of the distance to $n^{\rm{th}}$-nearest UE, $r_{\rm d}$, in two dimensions is given by \cite{I},
\begin{equation}\label{PDF_r_d}
g_{(n)}(r_{\rm d}) = \frac{2}{\Gamma (n)} (\lambda_{\rm d} \pi)^n r_{\rm d}^{2n - 1} \exp(-\pi \lambda_{\rm d} r_d^2).
\end{equation}

It is worth mentioning that unlike other papers' assumptions that consider the D2D UEs in a pair to be uniformly distributed in the vicinity of each other such as \cite{B} or are the nearest neighbors to each other like \cite{D}, our pairing scheme is more general and flexible since by setting $n$ to one we achieve nearest neighbor selection scheme. This scheme can be used in variety of applications providing proximity services introduced by Long Term Evolution (LTE) advanced.

\section{Coverage Probability and Spectral Efficiency}\label{CovProb_Th}

For a typical UE located at distance $x$ from its transmitter, and $\beta$ as target signal-to-interference-plus-noise ratio (SINR), the coverage probability of this UE is given by
\begin{equation}\label{Formula: Coverage Probe}
\mathcal{C}_{\mathcal{X}} = \mathbb{E}_x[\mathbb{P}[\text{SINR}_{\mathcal{X}}(x) \geq \beta]].
\end{equation}
Eq. (\ref{Formula: Coverage Probe}) holds for any arbitrary node operating in mode $\mathcal{X} \in \{{\rm c},{\rm d}\}$, where ${\rm c}$ and ${\rm d}$ represent the cellular and D2D modes, respectively. The definition of SINR for cellular and D2D mode is given in subsections \ref{Cellular_mode} and \ref{D2D_mode}, respectively. The average rate of a typical UE is given by,
\begin{equation}\label{Ergodic_rate_definition}
\mathcal{R}_{\mathcal{X}} \triangleq \mathbb{E}_x[\mathbb{E}_{\text{SINR}_{\mathcal{X}}}[\text{ln}(1+\text{SINR}_{\mathcal{X}}(x)]].
\end{equation}
In Lemma \ref{lemma_BS_association_probability} we obtain BS association probability needed to derive coverage probability and spectral efficiency.
\begin{lemma}\label{lemma_BS_association_probability}
	\textit{The probability that a {\rm UE} associates to its nearest {\rm BS} is}
	\begin{flalign}\label{BS_association_probability}
	& {\mathcal{P}} = 2\pi {\lambda_{\rm b}}\int_0^\infty  {\exp \Big(- \gamma {{(k{P_{\rm b}})}^{-1}}{r_{\rm c}^\alpha } - \pi {\lambda _{\rm b}}{r_{\rm c}^2}\Big) r_{\rm c} dr_{\rm c}} & \\\notag & \stackrel{(\alpha = 4)}{=}
	\pi \lambda_{\rm b} \sqrt{\frac{\pi k P_{\rm b}}{4 \gamma}} \exp \Big(\frac{k P_{\rm b} \pi^2 \lambda_{\rm b}^2}{4 \gamma}\Big) \Big [1 - \Phi (\pi \lambda_{\rm b} \sqrt{\frac{k P_{\rm b}}{4 \gamma}})\Big],
	\end{flalign}
	where $\Phi$ is the probability integral\footnote{Probability integral is defined by $\Phi(x) = \frac{1}{\sqrt{2 \pi}} \int_{0}^x e^{-t^2}dt$}.
\end{lemma}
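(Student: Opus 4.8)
The plan is to evaluate the association probability directly from the mode-selection rule. By the criterion in \eqref{User_association_condition}, a typical UE associates to its nearest BS precisely when $k P_{\rm b} h_{\rm c} r_{\rm c}^{-\alpha} > \gamma$, so that $\mathcal{P} = \mathbb{P}\big[k P_{\rm b} h_{\rm c} r_{\rm c}^{-\alpha} > \gamma\big]$, where the randomness is over both the Rayleigh fading $h_{\rm c}$ and the nearest-BS distance $r_{\rm c}$. The natural route is to condition on $r_{\rm c}$ first and then integrate against its density \eqref{PDF_r_c}.

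First I would condition on the distance. For fixed $r_{\rm c}$ the event becomes $h_{\rm c} > \gamma (k P_{\rm b})^{-1} r_{\rm c}^{\alpha}$, and since $h_{\rm c}\sim\exp(1)$ its complementary CDF gives $\mathbb{P}[h_{\rm c} > t] = e^{-t}$. Hence the conditional association probability is $\exp\!\big(-\gamma (k P_{\rm b})^{-1} r_{\rm c}^{\alpha}\big)$. Averaging this over $r_{\rm c}$ with the Rayleigh density $f_{r_{\rm c}}(r) = 2\pi\lambda_{\rm b} r\, e^{-\pi\lambda_{\rm b} r^2}$ of \eqref{PDF_r_c} immediately produces the integral in the first line of \eqref{BS_association_probability}; this part is essentially a one-line application of the tower property.

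The substantive work is the closed form for $\alpha = 4$. Here I would apply the substitution $u = r_{\rm c}^2$ (so $r_{\rm c}\,dr_{\rm c} = \tfrac{1}{2}\,du$), which collapses the quartic-plus-quadratic exponent $-\gamma(kP_{\rm b})^{-1}r_{\rm c}^4 - \pi\lambda_{\rm b} r_{\rm c}^2$ into a pure quadratic $-a u^2 - b u$ with $a = \gamma(kP_{\rm b})^{-1}$ and $b = \pi\lambda_{\rm b}$. Completing the square as $-a\big(u + \tfrac{b}{2a}\big)^2 + \tfrac{b^2}{4a}$ and then setting $v = \sqrt{a}\,\big(u + \tfrac{b}{2a}\big)$ turns the integral into a Gaussian tail $\int_{b/(2\sqrt{a})}^{\infty} e^{-v^2}\,dv$, with the prefactor $\exp(b^2/4a)$ reproducing the exponential $\exp\!\big(kP_{\rm b}\pi^2\lambda_{\rm b}^2/4\gamma\big)$ and the lower limit $\tfrac{b}{2\sqrt{a}}$ matching the argument $\pi\lambda_{\rm b}\sqrt{kP_{\rm b}/4\gamma}$ of the probability integral.

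The one point demanding care, and the likely main obstacle, is expressing the residual Gaussian tail in terms of the paper's probability integral $\Phi$ with the correct normalization. Writing the tail as $\int_0^{\infty} e^{-v^2}\,dv - \int_0^{c} e^{-v^2}\,dv$, evaluating the first piece as $\sqrt{\pi}/2$, and identifying the second with $\Phi$ as defined in the footnote, I would factor out $\sqrt{\pi}/2$ to obtain the bracketed term $\big[1 - \Phi(\cdot)\big]$. Collecting the remaining constant $\pi\lambda_{\rm b}\sqrt{kP_{\rm b}/\gamma}\cdot\tfrac{\sqrt{\pi}}{2} = \pi\lambda_{\rm b}\sqrt{\tfrac{\pi k P_{\rm b}}{4\gamma}}$ then assembles the stated closed form. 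Tracking the factors of $\sqrt{\pi}$ and $2$ across the two changes of variables is the only place where a bookkeeping error could slip in.
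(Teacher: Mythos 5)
Your derivation of the general-$\alpha$ integral is exactly the paper's proof: condition on $r_{\rm c}$, use the exponential tail of $h_{\rm c}\sim\exp(1)$ to get $\exp(-\gamma(kP_{\rm b})^{-1}r_{\rm c}^{\alpha})$, and average over the Rayleigh density \eqref{PDF_r_c}. The paper's proof stops at the integral form and never derives the $\alpha=4$ closed form; your completion-of-the-square evaluation of it is correct and fills that gap, with the single caveat that the bracket $\big[1-\Phi(\cdot)\big]$ with prefactor $\sqrt{\pi}/2$ emerges only if $\Phi$ denotes the error function $\frac{2}{\sqrt{\pi}}\int_0^x e^{-t^2}\,dt$, not the normalization $\frac{1}{\sqrt{2\pi}}\int_0^x e^{-t^2}\,dt$ given in the paper's footnote --- an inconsistency in the paper's notation rather than in your argument.
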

\begin{proof}
	When $P_{r_{\rm c}} > \gamma$, the typical {\rm UE} associates to its nearest {\rm BS} and becomes a {\rm CUE}. Therefore,
	\begin{flalign}\label{Proof_BS_association_probability}
	\mathcal{P} & = \mathbb{P}[P_{r_{\rm c}} > \gamma] = \mathbb{P}[h > \gamma (k P_{\rm b})^{-1} r_{\rm c}^{\alpha}] \notag & \\ & \stackrel{(a)}{=}
	2 \pi \lambda_{\rm b} \int_0^{\infty}e^{-\gamma (k P_{\rm b})^{-1} r_{\rm c}^{\alpha}} e^{- \pi \lambda_{\rm b} r_{\rm c}^2}r_{\rm c} {\rm d}r_{\rm c} \notag & \\ & =
	2 \pi \lambda_{\rm b} \int_0^{\infty} \exp\Big(-\gamma (k P_{\rm b})^{-1} r_{\rm c}^{\alpha} - \pi \lambda_{\rm b} r_{\rm c}^2 \Big)r_{\rm c} {\rm d}r_{\rm c},
	\end{flalign}
	where ($a$) follows from the equation (\ref{PDF_r_c}).
\end{proof}
From Lemma 1, we observe that BS association is influenced by BSs' density, transmit power and bias factor. The effect of these parameters are intuitive, that is UEs are more favorable to associate to a BS and thereby choose cellular mode when the number of BSs or transmit power increases. Now, having the densities of the CUEs ($\lambda_{\rm c} = \lambda_{\rm u} \mathcal{P}$) and DUEs ($\lambda_{\rm d} = \lambda_{\rm u} (1 - \mathcal{P})$) ,we aim to analyze coverage probability and average rate for both cellular and D2D modes.

\subsection{Cellular mode}\label{Cellular_mode}
In an overlaid D2D network described in section \ref{System Model}, CUEs suffer from the inter-cell interference caused by the neighboring BSs. Therefore, the experienced SINR at a typical CUE located in distance $r_{\rm c}$ from its associated BS is given by
\begin{align}
\text{SINR}_{{\rm c}}(r_{\rm c}) = \frac{P_{\rm b} h_{\rm c} r_{\rm c}^{-\alpha}}{\sum_{i \in \text{{\boldm $\Phi_{\rm b}$}} \backslash \{o\}} P_{\rm b} h_{i} r_{{\rm c}i}^{-\alpha} + \sigma ^2}.
\end{align}
where $\sigma ^2$ is the background noise.
It is worth to note that even though the PDF of the distance between a typical UE and its nearest BS follows Rayleigh distribution, here, the assumption of being in cellular mode affects the aforementioned PDF and we have to compute the PDF of the distance between CUE and its serving BS. This PDF is given in Lemma \ref{lemma_PDF_f_X}.

\begin{lemma}\label{lemma_PDF_f_X}
	The {\rm PDF} of the distance $X$ between a typical {\rm CUE} and its serving {\rm BS} represented by $f_{X}(x)$ is
	\begin{equation}\label{PDF_f_X}
		f_{X}(x)=\frac{2 \pi \lambda_{\rm b}}{\mathcal{P}} x \exp(-\frac{\gamma x^{\alpha}}{k P_{\rm b}} - \pi \lambda_{\rm b} x^2).
	\end{equation}
\end{lemma}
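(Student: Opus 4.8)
The plan is to treat $f_X$ as a conditional density: the distance $X$ is precisely the nearest-BS distance $r_{\rm c}$ conditioned on the event that the association test in (\ref{User_association_condition}) succeeds, so that the typical UE is a CUE. I would therefore invoke Bayes' rule in the form
\[
f_X(x)\,{\rm d}x = \mathbb{P}[r_{\rm c} \in {\rm d}x \mid \text{CUE}] = \frac{\mathbb{P}[\text{CUE} \mid r_{\rm c} = x]\,\mathbb{P}[r_{\rm c} \in {\rm d}x]}{\mathbb{P}[\text{CUE}]},
\]
which reduces the problem to identifying the three factors on the right-hand side.

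First I would compute the conditional association probability at a fixed distance. Because the fading $h_{\rm c} \sim \exp(1)$ is independent of the geometry, the event $\{\text{CUE}\}$ at $r_{\rm c} = x$ is exactly $\{k P_{\rm b} h_{\rm c} x^{-\alpha} > \gamma\} = \{h_{\rm c} > \gamma (k P_{\rm b})^{-1} x^{\alpha}\}$, whose probability is the exponential tail $\exp(-\gamma (k P_{\rm b})^{-1} x^{\alpha})$. The prior $\mathbb{P}[r_{\rm c} \in {\rm d}x]$ is the Rayleigh density $f_{r_{\rm c}}(x) = 2\pi\lambda_{\rm b} x e^{-\pi\lambda_{\rm b} x^2}$ from (\ref{PDF_r_c}), and the normalizing denominator $\mathbb{P}[\text{CUE}]$ is precisely the association probability $\mathcal{P}$ established in Lemma \ref{lemma_BS_association_probability}.

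Substituting these three ingredients and collecting the exponentials would give
\[
f_X(x) = \frac{2\pi\lambda_{\rm b}}{\mathcal{P}}\, x \exp\!\Big(-\frac{\gamma x^{\alpha}}{k P_{\rm b}} - \pi\lambda_{\rm b} x^2\Big),
\]
as claimed. I expect the only delicate point to be the justification of the conditioning step, namely that conditioning on the association event simply reweights the nearest-distance density by the distance-dependent success probability and renormalizes by $\mathcal{P}$; this relies essentially on the independence of $h_{\rm c}$ and $r_{\rm c}$. As a consistency check one verifies that the resulting $f_X$ integrates to one, which holds by the very definition of $\mathcal{P}$ as the integral in Lemma \ref{lemma_BS_association_probability}.
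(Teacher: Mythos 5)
Your proposal is correct and follows essentially the same route as the paper: the paper conditions on the association event, writes $\mathbb{P}[X>x]=\mathbb{P}[r_{\rm c}>x,\mathcal{X}={\rm c}]/\mathcal{P}$ using the exponential fading tail and the Rayleigh density of $r_{\rm c}$, and then differentiates the resulting CCDF, which is just the cumulative version of your Bayes-rule reweighting at the density level. The independence of $h_{\rm c}$ and $r_{\rm c}$ and the normalization by $\mathcal{P}$ are exactly the ingredients the paper uses as well.
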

\begin{proof}
	The event $X > x$ is equal to $r_{\rm c} > x$ conditioned on the {\rm UE} choosing cellular mode, i.e. we have,
	\begin{flalign}\label{formula_PDF_X}
		& \mathbb{P}[X > x] = \mathbb{P}[r_{\rm c} > x | \mathcal{X} = {\rm c}] = \frac{\mathbb{P}[r_{\rm c} > x, \mathcal{X} = {\rm c}]}{\mathbb{P}[\mathcal{X}={\rm c}]},
	\end{flalign}
	where $\mathbb{P}[\mathcal{X}={\rm c}] = \mathcal{P}$ and for $\mathbb{P}[r_{\rm c} > x, \mathcal{X} = {\rm c}]$ we have,
	\begin{flalign} \label{PDF_cellular_2}
		\mathbb{P}[r_{\rm c} > x, \mathcal{X} = {\rm c}] & = \mathbb{P}[r_{\rm c} > x, k P_{\rm b} h_{\rm c} r_{\rm c}^{-\alpha} > \gamma] \notag & \\
		& = \int_{x}^{\infty} \mathbb{P}[h > \frac{\gamma r_{\rm c}^{\alpha}}{k P_{\rm b}}] f_{r_{\rm c}}(r){\rm d}r \notag & \\ 
		& = 2 \pi \lambda_{\rm b} \int_{x}^{\infty} r \exp(-\frac{\gamma r_{\rm c} ^{\alpha}}{k P_{\rm b}}) \exp(-\pi \lambda_{\rm b} r_{\rm c}^{\alpha}) {\rm d}r.
	\end{flalign} 
	Plugging (\ref{PDF_cellular_2}) into (\ref{formula_PDF_X}) yields 
	\begin{align}
		\mathbb{P}[X > x] = \frac{2 \pi \lambda_{\rm b}}{\mathcal{P}} \int_{x}^{\infty} r \exp(-\frac{\gamma r_{\rm c} ^{\alpha}}{k P_{\rm b}}) \exp\{-\pi \lambda_{\rm b} r_{\rm c}^{\alpha}\} {\rm d}r.
	\end{align}
	The cumulative density function ({\rm CDF}) of $X$ is defined by $F_X(x) = 1 - \mathbb{P}[X > x]$. Now, by taking derivation of the $F_X(x)$ over $x$, the intended {\rm PDF} denoted by $f_X(x)$ is obtained as
	\begin{align}
		f_{X}(x)=\frac{dF_X(x)}{dx} = \frac{2 \pi \lambda_{\rm b}}{\mathcal{P}} x \exp(-\frac{\gamma x^{\alpha}}{k P_{\rm b}} - \pi \lambda_{\rm b} x^2). \notag
	\end{align}
\end{proof}
In the following proposition, the coverage probability for a typical UE is computed.

\begin{prop}\label{Proposition_coverage_probability_cellular}
With an overlaied {\rm cellular-D2D}, the coverage probability of a typical {\rm CUE} associated to its nearest {\rm BS} is
\begin{align}\label{Coverage_probability_cellular}
{\mathcal{C}_{\rm c}} = & {\mathbb{E}_r}[\mathbb{P}[\text{SINR}_{\rm c}(r) \geq \beta ]] =  \frac{2\pi {\lambda_{\rm b}}}{\mathcal{P}}
\int_0^\infty r\exp \Big(- \pi {\lambda_{\rm b}}{r^2} \Big) \notag & \\ & \times 
\exp \Big( - \frac{{\beta {\sigma ^2}{r^\alpha }}}{{{P_{\rm b}}}} - \frac{\gamma r^\alpha}{k P_{\rm b}} -2\pi {\lambda_{\rm b}}{{(\beta {r^\alpha})}^{\frac{2}{\alpha}}}L(\alpha ,\beta ) \Big) {\rm d}r,
\end{align}
where $L(\beta,\alpha) = \int_{\beta^{-\frac{1}{\alpha}}}^{\infty} \frac{u}{1 + u^{\alpha}}{\rm d}u$. \\
For $\alpha = 4$ as a special case, we have \\
$
\mathcal{C}_{\rm c} = \frac{\pi \lambda_{\rm b}}{\mathcal{P}} \sqrt{\frac{\pi k P_{\rm b}}{4 (k \beta \sigma ^2 + \gamma)}} \exp\bigg(\frac{k P_{\rm b} \pi ^2 \lambda_{\rm b} ^2}{4 (k \beta \sigma^2 + \gamma)}\Big(\sqrt{\beta} \arctan{\sqrt{\beta}} + 1 \Big)^2 \bigg) \bigg[1 - \Phi \Big(\pi \lambda_{\rm b} (1+\sqrt{\beta} \arctan{\sqrt{\beta})} \sqrt{\frac{k P_{\rm b}}{4 (k \beta \sigma^2 + \gamma)}} \Big) \bigg] .
$
\end{prop}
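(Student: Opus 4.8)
The plan is to evaluate the inner probability $\mathbb{P}[\text{SINR}_{\rm c}(r)\ge\beta]$ for a fixed serving distance $r$ and then average over $r$ using the conditional density $f_X$ from Lemma~\ref{lemma_PDF_f_X}. Fixing $r$ and abbreviating the aggregate interference as $I=\sum_{i\in\Phi_{\rm b}\backslash\{o\}}P_{\rm b}h_i r_{{\rm c}i}^{-\alpha}$, the coverage event rearranges to $h_{\rm c}\ge\beta r^\alpha(I+\sigma^2)/P_{\rm b}$. Since $h_{\rm c}\sim\exp(1)$, I would take the expectation over $h_{\rm c}$ first, turning the conditional coverage into $\exp(-\beta r^\alpha\sigma^2/P_{\rm b})\,\mathbb{E}_I[\exp(-\beta r^\alpha I/P_{\rm b})]$, i.e.\ a deterministic noise factor times the Laplace transform $\mathcal{L}_I(s)$ of the interference evaluated at $s=\beta r^\alpha/P_{\rm b}$.

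The second step computes $\mathcal{L}_I(s)$ through the probability generating functional of the interfering-BS \textbf{PPP}. After averaging each term $\exp(-sP_{\rm b}h_i r_{{\rm c}i}^{-\alpha})$ over its $\exp(1)$ fading, the PGFL gives $\mathcal{L}_I(s)=\exp\!\big(-2\pi\lambda_{\rm b}\int_r^\infty(1-(1+sP_{\rm b}v^{-\alpha})^{-1})\,v\,{\rm d}v\big)$. The step I expect to need the most care is that the lower limit of integration is $r$ rather than $0$: because the typical CUE is served by its \emph{nearest} BS, every interferer must lie beyond distance $r$, and respecting this exclusion region is precisely what distinguishes this computation from one without a serving-distance constraint. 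Inserting $s=\beta r^\alpha/P_{\rm b}$ and rescaling the integration variable (e.g.\ $u=v/(r\beta^{1/\alpha})$) collapses the integral to $(\beta r^\alpha)^{2/\alpha}L(\alpha,\beta)$ with $L(\alpha,\beta)=\int_{\beta^{-1/\alpha}}^\infty u/(1+u^\alpha)\,{\rm d}u$, matching the stated exponent.

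Combining the noise factor with $\mathcal{L}_I$ gives the conditional coverage $\exp(-\beta\sigma^2 r^\alpha/P_{\rm b}-2\pi\lambda_{\rm b}(\beta r^\alpha)^{2/\alpha}L(\alpha,\beta))$; multiplying by $f_X(r)$ and integrating over $r\in(0,\infty)$ then reproduces the general expression (\ref{Coverage_probability_cellular}), the $\gamma r^\alpha/(kP_{\rm b})$ and $\pi\lambda_{\rm b}r^2$ terms in the exponent being inherited directly from the conditional density of Lemma~\ref{lemma_PDF_f_X}.

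For the $\alpha=4$ case I would first reduce $L(4,\beta)=\tfrac12\arctan\sqrt{\beta}$ via $t=u^2$, so that the two $r^2$-terms merge into $\pi\lambda_{\rm b}(1+\sqrt{\beta}\arctan\sqrt{\beta})r^2$ and the two $r^4$-terms into $(k\beta\sigma^2+\gamma)r^4/(kP_{\rm b})$; write $A=\pi\lambda_{\rm b}(1+\sqrt{\beta}\arctan\sqrt{\beta})$ and $B=(k\beta\sigma^2+\gamma)/(kP_{\rm b})$ for these coefficients. The surviving integral $\int_0^\infty r\exp(-Ar^2-Br^4)\,{\rm d}r$ becomes, under $t=r^2$, the Gaussian-type integral $\tfrac12\int_0^\infty\exp(-At-Bt^2)\,{\rm d}t$; completing the square and recognizing the resulting tail as the probability integral $\Phi$ yields the $1-\Phi(\cdot)$ factor, while $A^2/(4B)$ and $A/(2\sqrt{B})$ supply exactly the exponential prefactor and the argument of $\Phi$ quoted in the statement. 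Apart from getting the exclusion region right in the second step, the remainder is a chain of routine substitutions.
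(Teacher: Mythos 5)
Your derivation is correct and is exactly the standard PGFL-based argument that the paper itself invokes by reference (it omits the proof, pointing to \cite[Theorem 2]{B}): condition on the serving distance via Lemma~\ref{lemma_PDF_f_X}, exploit $h_{\rm c}\sim\exp(1)$ to factor out the noise term and the interference Laplace transform with the exclusion ball of radius $r$, rescale to obtain $(\beta r^\alpha)^{2/\alpha}L(\alpha,\beta)$, and for $\alpha=4$ reduce to a Gaussian-type integral in $t=r^2$ whose completed square produces the $1-\Phi(\cdot)$ factor (consistent with $\Phi$ normalized as the error function). The only implicit assumption, shared with the paper's stated formula, is that the fading used in the mode-selection test is independent of the fading during data transmission, so that the $\exp(-\gamma r^\alpha/(kP_{\rm b}))$ factor from $f_X$ multiplies rather than couples with the coverage event.
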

\begin{proof}
The proof is similar to other works done in \textbf{PPP}, e.g. \cite[Theorem 2]{B} (omitted due to page limitation).
\end{proof}
Furthermore, the average rate of a link associated to a typical CUE is provided in the following proposition.

\begin{prop}\label{Propisition_ergodic_rate_cellular}
With an overlaied {\rm cellular-D2D}, the average rate of a typical {\rm CUE} associated to its nearest {\rm BS} is
\begin{flalign}\label{Ergodic_rate_cellular}
 \mathcal{R}_{\rm c} & = \mathbb{E}_{r_{\rm c}}\Bigg[\mathbb{E}_{\text{SINR}_{\rm c}}\bigg[\text{ln} \Big(1 + \text{SINR}_{\rm c}(r_{\rm c})\Big)\bigg] \Bigg] \notag & \\ &
= \mathbb{E}_{r_{\rm c}}\Bigg[\int_0^{\infty} \mathbb{P}\bigg[\text{ln}\Big(1 + \text{SINR}_{\rm c}(r_{\rm c})\Big) > t\bigg]{\rm d}t \Bigg] \notag & \\ & =
\frac{2 \pi \lambda_{\rm b}}{\mathcal{P}} \int_0^{\infty} r \exp\Big(- \pi \lambda_{\rm b} r^2 - \frac{\gamma r^{\alpha}}{k P_{\rm b}}\Big) \notag & \\ & \times
\bigg(\int_0^{\infty} \exp\Big(-\frac{e^t - 1}{P_{\rm b}}r^{\alpha} \sigma^2 \Big) \notag & \\ & \times
\exp \Big(- 2 \pi \lambda_{\rm b} (\frac{e^t-1}{r^{-\alpha}})^\frac{2}{\alpha} L\big(e^t - 1, \alpha\big)\Big) {\rm d}t\bigg){\rm d}r,
\end{flalign}
where $L(e^t - 1,\alpha) = \int_{(e^t-1)^{-\frac{1}{\alpha}}}^{\infty} \frac{u}{1 + u^{\alpha}}{\rm d}u$. 
\end{prop}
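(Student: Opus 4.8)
The plan is to reduce the rate computation to the coverage probability already established, via the standard tail-integral representation of the expected logarithm. First I would invoke the fact that for any non-negative random variable $Y$ one has $\mathbb{E}[Y] = \int_0^\infty \mathbb{P}[Y > t]\,dt$; applying this with $Y = \ln(1 + \text{SINR}_{\rm c}(r_{\rm c}))$, which is non-negative since $\text{SINR}_{\rm c} \geq 0$, gives the second line of the claim,
\begin{equation}
\mathbb{E}_{\text{SINR}_{\rm c}}\big[\ln(1+\text{SINR}_{\rm c}(r_{\rm c}))\big] = \int_0^\infty \mathbb{P}\big[\ln(1+\text{SINR}_{\rm c}(r_{\rm c})) > t\big]\,dt. \notag
\end{equation}
The key observation is then that $\ln(1+\text{SINR}_{\rm c}) > t$ is equivalent to $\text{SINR}_{\rm c} > e^t - 1$, so the integrand is exactly the conditional coverage probability at serving distance $r_{\rm c}$ with the target threshold $\beta$ replaced by $e^t - 1$.

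Next I would compute this conditional coverage probability. Conditioning on $r_{\rm c} = r$ and using that the desired-signal fading $h_{\rm c}$ is $\exp(1)$, I would write $\mathbb{P}[\text{SINR}_{\rm c}(r) > \theta] = \mathbb{P}[h_{\rm c} > \theta r^\alpha (I + \sigma^2)/P_{\rm b}]$, where $I$ denotes the aggregate interference from the remaining BSs; the exponential tail factorizes this into $\exp(-\theta \sigma^2 r^\alpha / P_{\rm b})$ times the Laplace transform $\mathcal{L}_I(\theta r^\alpha / P_{\rm b})$. Evaluating $\mathcal{L}_I$ through the probability generating functional of the BS PPP, with the interferers confined to distances exceeding $r$ because the associated BS is the nearest, and performing the change of variables $u = v(\theta r^\alpha)^{-1/\alpha}$ produces the factor $\exp(-2\pi\lambda_{\rm b}(\theta r^\alpha)^{2/\alpha} L(\alpha,\theta))$, with the lower limit $\theta^{-1/\alpha}$ emerging in $L$. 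This is precisely the integrand underlying the cellular coverage probability of Proposition \ref{Proposition_coverage_probability_cellular}, so I would substitute $\beta \mapsto e^t - 1$ there rather than redo the derivation.

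Finally I would take the outer expectation over $r_{\rm c}$ using its conditional density $f_X$ from Lemma \ref{lemma_PDF_f_X}, namely $f_X(r) = \frac{2\pi\lambda_{\rm b}}{\mathcal{P}}\, r \exp(-\gamma r^\alpha/(k P_{\rm b}) - \pi\lambda_{\rm b} r^2)$, and exchange the order of the $t$- and $r$-integrals, which is justified by non-negativity of the integrand together with Tonelli's theorem. Pulling the distance-dependent prefactor $r\exp(-\pi\lambda_{\rm b}r^2 - \gamma r^\alpha/(k P_{\rm b}))$ outside the inner $t$-integral, and leaving the two threshold-dependent exponentials $\exp(-\frac{e^t-1}{P_{\rm b}}r^\alpha\sigma^2)$ and $\exp(-2\pi\lambda_{\rm b}((e^t-1)/r^{-\alpha})^{2/\alpha} L(e^t-1,\alpha))$ inside, yields the stated triple-integral form.

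I expect the main obstacle to be the interference Laplace-transform step: correctly imposing the exclusion region $v > r$ arising from nearest-BS association, and tracking the change of variables so that the threshold $e^t-1$ lands simultaneously in the lower limit of $L(e^t-1,\alpha)$ and in the prefactor $((e^t-1)/r^{-\alpha})^{2/\alpha}$. Everything else is bookkeeping, and since this Laplace transform is identical to the one in Proposition \ref{Proposition_coverage_probability_cellular} (whose proof is cited), the genuinely new content reduces to the tail-integral identity and the substitution $\beta = e^t - 1$.
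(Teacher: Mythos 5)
Your proposal is correct and follows exactly the route the paper intends: the tail-integral identity $\mathbb{E}[Y]=\int_0^\infty \mathbb{P}[Y>t]\,{\rm d}t$ with the substitution $\beta\mapsto e^t-1$ into the conditional coverage expression of Proposition~\ref{Proposition_coverage_probability_cellular}, averaged over the serving distance with the density $f_X$ of Lemma~\ref{lemma_PDF_f_X}. The paper omits this proof entirely (and defers the underlying coverage derivation to \cite[Theorem~2]{B}), and your Laplace-transform step with the exclusion region $v>r$ and the change of variables yielding $L(e^t-1,\alpha)$ reproduces that standard argument correctly.
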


\subsection{D2D mode}\label{D2D_mode}
In D2D mode, the UE selects $n^{\rm{th}}$-nearest neighbor as its pair. It should be noted that in this case, the interfering nodes are not independent from each other. They are related to each other through the $n^{\rm th}$ user selected as the pair. This dependency makes it necessary to consider the distance distribution between the interfering nodes and $n^{\rm th}$ neighbor. To obtain the subsequent analysis, we first make an assumption which will simplify the analysis in the sequel. 

\begin{assumption}
	The set of the interfering nodes operating in {\rm D2D} mode, are independent from each other.
\end{assumption}
In section \ref{results}, we will justify this assumption through the simulations and will demonstrate that the ignorance of the dependency between D2D UEs does not deteriorate the validity of the analysis. In our overlaid system model, the interfering nodes are only D2D transmitters. Even though we model the UEs as a \textbf{PPP} {\boldm $\Phi_{\rm u}$}, the UEs that operate in D2D mode can not be modeled by \textbf{PPP}. For the sake of simplicity in the D2D analysis, we assume that the D2D interfering nodes constitute a \textbf{PPP} {\boldm $\Phi_{\rm d}$}. We will justify this assumption later in Sec \ref{results}.

When a UE chooses D2D mode, it can operate either in HD or FD mode. We assume that a D2D UE with the probability of $\mathcal{P}_{\textsc{\tiny HD}}(\mathcal{P}_{\textsc{\tiny FD}})$ can potentially operate in HD (FD) mode, where $\mathcal{P}_{\textsc{\tiny HD}}+\mathcal{P}_{\textsc{\tiny FD}}=1$. As a result, we have {\boldm $\Phi _{\rm d}$} = {\boldm $\Phi _{\rm HD}$} $\bigcup$ {\boldm $\Phi _{\rm FD}$} with densities of $\lambda_{\rm FD} = \lambda_{\rm d} \mathcal{P}_{\textsc{\tiny FD}}$ and $\lambda_{\rm d} \mathcal{P}_{\textsc{\tiny HD}}$ for {\boldm $\Phi _{\rm FD}$} and {\boldm $\Phi _{\rm HD}$}, respectively.
The sets {\boldm $\Phi _{\rm FD}$} and {\boldm $\Phi _{\rm HD}$} are independent from each other \cite{F}. We consider that half of HD users are transmitters and half of them are receivers. Hence, the density of transmitters for process {\boldm $\Phi _{\rm HD}$} is $\lambda_{\rm HD} = \frac{1}{2}\lambda_{\rm d} \mathcal{P}_{\textsc{\tiny HD}}$. In FD mode, all FD users are transceivers. The experienced SINR at a typical D2D UE located at distance $r_{\rm d}$ from its transmitter can be defined as

\begin{equation}
\text{SINR}_{{\rm d}}(r_{\rm d}) = \frac{P_{\rm d} h_{\rm d} r_{\rm d}^{-\alpha}}{I_{\rm HD} +I_{\rm FD} + P_{\rm d} \Delta \mathbbm{1}_{\rm FD} + \sigma^2},
\end{equation}
where $ I_{\rm HD} = \sum_{x \in \text{{\boldm $\Phi_{\rm HD}$}}} P_{\rm d} h_{\rm d} ||x||^{-\alpha}$ and $I_{\rm FD} = \sum_{x \in \text{{\boldm $\Phi_{\rm FD}$}}} P_{\rm d} h_{\rm d}||x||^{-\alpha}$. $\mathbbm{1}_{\rm FD}$ is an indicator function that the typical UE is operating in FD mode. $P_{\rm d} \Delta$ is the residual SI in transceiver with the SI cancellation factor $\Delta$, such that $ 0 \le \Delta \le 1$. Before computing the coverage probability we need to obtain the Laplace transform of interference as the key intermediate analysis. To characterize the Laplace transform of interference, it should be noted that the transmitter of interest is fixed a priori and does not participate in interference. Hence, The set of interfering nodes is divided into two subsets: $\mathcal{S}_{\rm in}$ and $\mathcal{S}_{\rm out}$, in which that $\mathcal{S}_{\rm in}$ stands for the interfering nodes located inside the ball of radius $r_{\rm d}$, and $\mathcal{S}_{\rm out}$ stands for the interfering nodes lie outside of the ball of radius $r_{\rm d}$. For the sake of simplicity in Laplace transform derivation, we make the following assumption.

\begin{assumption}\label{assump1}
	Although in {\rm \textbf{PPP}} model, we work with the density and there is no limitation on the total number of {\rm UEs}, to simplify the expressions in the sequel, we assume that the total number of {\rm UEs} is fixed and equal to $W$. We also assume that it may be possible that not all {\rm UEs} are operating in transmitting mode. So we denote the set of transmitting {\rm UEs} by $\mathcal{N}_t$ which is Poisson distributed with mean $\bar{m}$ conditioned on $|\mathcal{N}_t| \leq |W|$.
\end{assumption}

\begin{theorem}\label{Theorem_coverage_probability_D2D}
For a typical {\rm UE} that chooses {\rm HD-D2D} mode based on the mode selection scheme described in section \ref{System Model}, the coverage probability is given by
\begin{flalign}
\label{Coverage_probability_HD_D2D}
& \mathcal{C}_{{\rm HD}} = \mathbb{E}_{r_{\rm d}}[\mathbb{P}[\text{SINR}_{\rm d}(r_{\rm d}) \geq \beta]] 
= \notag & \\ & \int_0^{\infty} \frac{2}{\Gamma (n)} (\lambda_{\rm HD} \pi)^n r_{\rm d}^{2n - 1} \mathcal{L}_{I_{\rm HD}}^{{\rm HD}}(\frac{\beta}{P_{\rm d} r_{\rm d}^{-\alpha}}) \mathcal{L}_{I_{\rm FD}}^{\rm HD}(\frac{\beta}{P_{\rm d} r_{\rm d}^{-\alpha}}) \notag & \\ & \times \exp\Big(-\pi \lambda_{\rm HD} r_{\rm d}^2 - \frac{\beta \sigma^2 r_{\rm d}^{\alpha}}{P_{\rm d}}\Big) {\rm d}r_{\rm d},
 \end{flalign}
 where
  $\mathcal{L}_{I_{\rm HD}}^{\rm HD}(s)$ and $\mathcal{L}_{I_{\rm FD}}^{\rm HD}(s)$ stand for the Laplace transform of interference of {\rm HD} and {\rm FD} users on {\rm HD} users and are given in (18) and (19), respectively. Note that in (18), $f_{\rm min}=\min(k,n-1),\xi=\sum_{j = 0}^{M-1} \frac{\mathcal{Q}^j e^{-\mathcal{Q}}}{j!},
  \mathcal{Q} = (\bar{m}-1)$ and $M_{\rm HD} = W \mathcal{P}_{\textsc{\tiny HD}}$.
  
  For a typical {\rm UE} that chooses {\rm FD-D2D} mode based on the mode selection scheme described in section \ref{System Model}, the coverage probability is given by
  \begin{flalign}
  \label{Coverage_probability_FD_D2D}
  & \mathcal{C}_{{\rm FD}} = \mathbb{E}_{r_{\rm d}}[\mathbb{P}[\text{SINR}_{\rm d}(r_{\rm d}) \geq \beta]] 
  = \notag & \\ & \int_0^{\infty} \frac{2}{\Gamma (n)} (\lambda_{\rm FD} \pi)^n r_{\rm d}^{2n - 1} \mathcal{L}_{I_{\rm HD}}^{\rm FD}(\frac{\beta}{P_{\rm d} r_{\rm d}^{-\alpha}}) \mathcal{L}_{I_{\rm FD}}^{\rm FD}(\frac{\beta}{P_{\rm d} r_{\rm d}^{-\alpha}}) \notag & \\ & \times  \exp\Big(-\pi \lambda_{\rm FD} r_{\rm d}^2 - \frac{\beta \sigma^2 r_{\rm d}^{\alpha}}{P_{\rm d}}- \beta \Delta r_{\rm d}^{\alpha}\Big) {\rm d}r_{\rm d},
  \end{flalign}
  where
  $\mathcal{L}_{I_{\rm HD}}^{\rm FD}(s)$ and $\mathcal{L}_{I_{\rm FD}}^{\rm FD}(s)$ stand for the Laplace transform of interference of {\rm HD} and {\rm FD} users on {\rm FD} pairs and are given in (20) and (21), respectively. Note that in (20) $M_{\rm FD} = W \mathcal{P}_{\textsc{\tiny FD}}$.
  
\end{theorem}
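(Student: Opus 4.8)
The plan is to follow the standard Rayleigh-fading coverage recipe: first condition on the pairing distance $r_{\rm d}$, then de-condition with the $n^{\rm th}$-nearest-neighbor density \eqref{PDF_r_d}, leaving the two interference Laplace transforms as the only nontrivial ingredients. First I would fix $r_{\rm d}$ and rewrite the coverage event. Since $\text{SINR}_{\rm d}(r_{\rm d}) \geq \beta$ is equivalent to $h_{\rm d} \geq \frac{\beta r_{\rm d}^{\alpha}}{P_{\rm d}}\big(I_{\rm HD} + I_{\rm FD} + P_{\rm d}\Delta\mathbbm{1}_{\rm FD} + \sigma^2\big)$ and $h_{\rm d}\sim\exp(1)$ is independent of everything on the right-hand side, conditioning on the interference and invoking the independence of $\Phi_{\rm HD}$ and $\Phi_{\rm FD}$ together with the independence of the fading variables gives the factored form
\begin{equation}
\mathbb{P}[\text{SINR}_{\rm d}(r_{\rm d})\geq\beta\mid r_{\rm d}] = \exp\Big(-\frac{\beta\sigma^2 r_{\rm d}^{\alpha}}{P_{\rm d}} - \beta\Delta r_{\rm d}^{\alpha}\mathbbm{1}_{\rm FD}\Big)\,\mathcal{L}_{I_{\rm HD}}(s)\,\mathcal{L}_{I_{\rm FD}}(s),
\end{equation}
with $s = \frac{\beta r_{\rm d}^{\alpha}}{P_{\rm d}}$, where the self-interference factor $\exp(-\beta\Delta r_{\rm d}^{\alpha})$ (obtained from $sP_{\rm d}\Delta=\beta\Delta r_{\rm d}^\alpha$) is present only in FD mode.

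Next I would de-condition over $r_{\rm d}$. A typical HD (resp.\ FD) receiver pairs with its $n^{\rm th}$-nearest transmitter, so $r_{\rm d}$ follows the density \eqref{PDF_r_d} with $\lambda_{\rm d}$ replaced by $\lambda_{\rm HD}$ (resp.\ $\lambda_{\rm FD}$). Multiplying the conditional expression above by $g_{(n)}(r_{\rm d})$ and integrating reproduces the prefactor $\frac{2}{\Gamma(n)}(\lambda_{\rm HD}\pi)^n r_{\rm d}^{2n-1}\exp(-\pi\lambda_{\rm HD}r_{\rm d}^2)$ together with the noise (and, for FD, the SI) exponential, yielding exactly \eqref{Coverage_probability_HD_D2D} and \eqref{Coverage_probability_FD_D2D} once the two Laplace transforms are inserted.

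The hard part is computing $\mathcal{L}_{I_{\rm HD}}$ and $\mathcal{L}_{I_{\rm FD}}$, which is where Assumptions~1 and \ref{assump1} enter. Because we have conditioned on the serving transmitter being the $n^{\rm th}$-nearest node at distance $r_{\rm d}$, exactly $n-1$ potential interferers lie inside the ball $b(o,r_{\rm d})$ and the remainder outside; I would therefore split $I = \sum_{\mathcal{S}_{\rm in}} + \sum_{\mathcal{S}_{\rm out}}$ and treat the two sums separately. For $\mathcal{S}_{\rm out}$ the points form a PPP on $\{\|x\|>r_{\rm d}\}$, so the probability generating functional yields the usual factor $\exp\big(-2\pi\lambda\int_{r_{\rm d}}^{\infty}(1-\mathbb{E}_h[e^{-sP_{\rm d}h u^{-\alpha}}])u\,{\rm d}u\big)$, and the Rayleigh average over $h$ turns each integrand into $\frac{1}{1+(sP_{\rm d})^{-1}u^{\alpha}}$. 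For $\mathcal{S}_{\rm in}$ I would use that the $n-1$ inner points are i.i.d.\ with radial density proportional to $u$ on $[0,r_{\rm d}]$, and here Assumption~\ref{assump1} thins them: under the fixed-population-$W$ constraint only a truncated-Poisson number of nodes actually transmit, producing the partial Poisson sum $\xi=\sum_{j=0}^{M-1}\frac{\mathcal{Q}^{j}e^{-\mathcal{Q}}}{j!}$ with $\mathcal{Q}=\bar{m}-1$ and $M\in\{M_{\rm HD},M_{\rm FD}\}$, while the combinatorics of how many of the $n-1$ inner neighbors are active is what yields the cutoff $f_{\rm min}=\min(k,n-1)$.

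I expect the inside-ball term to be the main obstacle: the outside contribution and the final de-conditioning are routine PPP/PGFL manipulations, but getting the $\mathcal{S}_{\rm in}$ factor right requires carefully combining (i) the conditional spatial law of the $n-1$ nodes known to lie within radius $r_{\rm d}$, (ii) the truncated-Poisson active-transmitter count of Assumption~\ref{assump1} (hence $\xi$), and (iii) the binomial bookkeeping over active inner interferers (hence $f_{\rm min}$), after which Assumption~1 is invoked to factor the HD and FD contributions as independent Laplace transforms $\mathcal{L}_{I_{\rm HD}}^{\mathcal{X}}$ and $\mathcal{L}_{I_{\rm FD}}^{\mathcal{X}}$.
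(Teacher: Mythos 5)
Your proposal is correct and follows essentially the same route as the paper, whose own proof is simply a deferral to the standard argument of \cite[Lemma 7]{H}: condition on $r_{\rm d}$, exploit $h_{\rm d}\sim\exp(1)$ to factor the coverage probability into noise, self-interference, and Laplace-transform terms evaluated at $s=\beta r_{\rm d}^{\alpha}/P_{\rm d}$, de-condition with the $n^{\rm th}$-nearest-neighbor density, and obtain the same-mode Laplace transforms by splitting interferers inside and outside the ball of radius $r_{\rm d}$ (with the truncated-Poisson and binomial bookkeeping of Assumption~\ref{assump1}) while the cross-mode terms reduce to unconditioned PGFL evaluations. Your reconstruction of the omitted details, including the origin of $\xi$ and $f_{\rm min}$, matches the structure of equations (18)--(21).
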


\begin{figure*}[!t]
	\normalsize
	\setcounter{equation}{17}
	\begin{minipage}{1\linewidth}
	\begin{flalign}
	\label{Formula: Laplace HD_HD }
	\mathcal{L}_{I_{\rm HD}}^{\rm HD}(s) = &
	\sum_{k=0}^{M_{\rm HD}-1}\sum_{l=0}^{f_{min}} \left(\begin{array}{c}
		k \\ l
	\end{array} \right) \left(\begin{array}{c}
		\frac{n-1}{M_{\rm HD}-1}
	\end{array} \right)^l \left(\begin{array}{c}
		\frac{M_{\rm HD}-n}{M_{\rm HD}-1}
	\end{array} \right)^{(k-l)}
	\frac{1}{I_{1-\frac{n-1}{M_{\rm HD}-1}}(k-f_{\rm min},f_{\rm min}+1)}
	\notag & \\ &
	\times \bigg(\int_{0}^{r_{\rm d}} \frac{2 \pi x}{1+ s P_{\rm d} x^{-\alpha}} {\rm d}x \bigg)^l
	\bigg(\int_{r_{\rm d}}^{\infty} \frac{2 \pi x}{1+ s P_{\rm d} x^{-\alpha}} {\rm d}x \bigg)^{(n-l)}
    \frac{\big((\bar{m}_{\rm HD}-1)\big)^k e^{-(\bar{m}_{\rm HD}-1)}}{k!\xi},
	\end{flalign}
		\begin{flalign}
	\label{Formula: Laplace FD_HD }
	&\mathcal{L}_{I_{\rm FD}}^{\rm HD}(s) = \exp \bigg(-\pi \lambda_{\rm FD} (sP_{\rm d})^{\frac{2}{\alpha}} \Gamma(1+\frac{1}{\alpha}) \Gamma(1-\frac{1}{\alpha}) \bigg),&
	\end{flalign}
	\begin{flalign}
		\label{Formula: Laplace FD }
	\mathcal{L}_{I_{\rm FD}}^{\rm FD}(s) = &
	\sum_{k=0}^{M_{\rm FD}-1}\sum_{l=0}^{f_{\rm min}} \left(\begin{array}{c}
	k \\ l
	\end{array} \right) \left(\begin{array}{c}
	\frac{n-1}{M_{\rm FD}-1}
	\end{array} \right)^l \left(\begin{array}{c}
	\frac{M-n}{M_{\rm FD}-1}
	\end{array} \right)^{(k-l)}
	\frac{1}{I_{1-\frac{n-1}{M_{\rm FD}-1}}(k-f_{\rm min},f_{\rm min}+1)}
	\notag & \\ &
	\times \bigg(\int_{0}^{r_{\rm d}} \frac{2 \pi x}{1+s P_{\rm d} x^{-\alpha}} {\rm d}x \bigg)^l
	\bigg(\int_{r_d}^{\infty} \frac{2 \pi x}{1+s P_{\rm d} x^{-\alpha}} {\rm d}x \bigg)^{(n-l)}
	\frac{\big((\bar{m}_{\rm FD}-1)\big)^k e^{-(\bar{m}_{\rm FD}-1)}}{k!\xi},
	\end{flalign}
			\begin{flalign}
	\label{Formula: Laplace HD_FD }
	&\mathcal{L}_{I_{\rm HD}}^{\rm FD}(s) = \exp \bigg(-\pi \lambda_{\rm HD} (sP_{\rm d})^{\frac{2}{\alpha}} \Gamma(1+\frac{1}{\alpha}) \Gamma(1-\frac{1}{\alpha}) \bigg).&
	\end{flalign}
	
	\hrulefill
	\vspace*{4pt}
\end{minipage}
\end{figure*}

\begin{proof}
The proof is similar to \cite[Lemma 7]{H}.
\end{proof}
To simplify the equations (17) and (19), we use the assumption similar to \cite{H}, that is $W \ll \bar{m}$. Thereby we have the following corollary.

\begin{corollary}
\label{bestlink_nearest_lower_corollary}
(Best link: the nearest): The Laplace transform of the interference for {\rm HD}/{\rm FD-D2D} modes, by considering nearest link, i.e. $n=1$, is given by
\begin{flalign}
& \mathcal{L}_{I_{\rm HD}}^{\rm HD}(s) = \exp \bigg(-2 \pi \lambda_{\rm HD} \int_{r_{\rm d}}^{\infty} \frac{s P_{\rm d} x^{-\alpha}}{1+s P_{\rm d} x^{-\alpha}} x {\rm d}x \bigg),
\end{flalign}
\begin{flalign}
& \mathcal{L}_{I_{\rm FD}}^{\rm FD}(s) = \exp \bigg(-2 \pi \lambda_{\rm FD} \int_{r_{\rm d}}^{\infty} \frac{s P_{\rm d} x^{-\alpha}}{1+s P_{\rm d} x^{-\alpha}} x {\rm d}x \bigg).
\end{flalign}
For the special case $\alpha = 4$, we have
$ \mathcal{L}_{I_{\rm HD}}^{\rm HD}(\frac{\beta r_{\rm d}^4}{P_{\rm d}}) = \exp\bigg(-\pi \lambda_{\rm HD} r_{\rm d}^2 \sqrt{\beta} \arctan(\beta)\bigg)$. We have the same result for {\rm FD}.
\end{corollary}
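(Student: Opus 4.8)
The plan is to specialize the finite-network Laplace transform (17) to $n=1$, then pass to the infinite-PPP limit afforded by the simplifying assumption, and finally recognize the surviving sum as a Poisson generating-function identity whose value is the claimed exponential. First I would set $n=1$ throughout (17). Because $f_{\min}=\min(k,n-1)=\min(k,0)=0$, the inner sum over $l$ collapses to the single term $l=0$, and in that term every combinatorial factor trivializes: $\binom{k}{0}=1$, $\left(\frac{n-1}{M_{\rm HD}-1}\right)^{0}=1$, $\left(\frac{M_{\rm HD}-n}{M_{\rm HD}-1}\right)^{k}=1^{k}=1$, while the regularized incomplete beta function reduces to $I_{1}(k,1)=1$, so its reciprocal is $1$. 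Moreover the interior integral carries the exponent $l=0$ and hence becomes the multiplicative identity and drops out; this is the analytic signature of the nearest-neighbour case, namely that with $n=1$ the serving node is the closest point so that every interferer lies outside the ball $b(o,r_{\rm d})$ and only the exterior integral $\int_{r_{\rm d}}^{\infty}\frac{2\pi x}{1+sP_{\rm d}x^{-\alpha}}\,\mathrm{d}x$ is relevant.

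Next I would invoke the assumption $W\ll\bar m$, under which the truncation factor $\xi=\sum_{j=0}^{M-1}\frac{\mathcal Q^{j}e^{-\mathcal Q}}{j!}$ tends to $1$, so the conditionally-Poisson interferer count becomes an unconditional Poisson variable and the truncated series becomes a full one. The remaining sum over the interferer count $k$, weighted by the Poisson masses $\frac{(\bar m_{\rm HD}-1)^{k}e^{-(\bar m_{\rm HD}-1)}}{k!}$ and carrying powers of the (area-normalized) exterior integral, is then exactly a Poisson generating functional. The mechanism producing the exponential is the elementary identity $\sum_{k\ge 0}\frac{\mathcal Q^{k}e^{-\mathcal Q}}{k!}z^{k}=e^{\mathcal Q(z-1)}$ together with the rewriting $\frac{1}{1+sP_{\rm d}x^{-\alpha}}=1-\frac{sP_{\rm d}x^{-\alpha}}{1+sP_{\rm d}x^{-\alpha}}$, so that identifying $z$ with the normalized exterior factor and $\mathcal Q$ with density times area (equivalently the $(1-c/N)^{N}\to e^{-c}$ binomial-to-Poisson, BPP-to-PPP passage) converts the sum into $\mathcal L_{I_{\rm HD}}^{\rm HD}(s)=\exp\!\big(-2\pi\lambda_{\rm HD}\int_{r_{\rm d}}^{\infty}\frac{sP_{\rm d}x^{-\alpha}}{1+sP_{\rm d}x^{-\alpha}}x\,\mathrm{d}x\big)$, where the lower limit $r_{\rm d}$ again records that the nearest point is the intended transmitter and not an interferer. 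The FD-on-FD statement follows verbatim under the substitutions $\lambda_{\rm HD}\to\lambda_{\rm FD}$, $M_{\rm HD}\to M_{\rm FD}$, $\bar m_{\rm HD}\to\bar m_{\rm FD}$, so it requires no separate computation.

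Finally, for $\alpha=4$ and $s=\beta r_{\rm d}^{4}/P_{\rm d}$ I would evaluate the surviving exterior integral in closed form: with $sP_{\rm d}=\beta r_{\rm d}^{4}$ the integrand becomes $\frac{\beta r_{\rm d}^{4}}{x^{4}+\beta r_{\rm d}^{4}}\,x$, and the substitution $t=x^{2}$ turns it into $\frac{\beta r_{\rm d}^{4}}{2}\int_{r_{\rm d}^{2}}^{\infty}\frac{\mathrm{d}t}{t^{2}+\beta r_{\rm d}^{4}}$, an elementary arctangent integral; using $\arctan y+\arctan(1/y)=\pi/2$ to collapse the evaluated limits and multiplying by $2\pi\lambda_{\rm HD}$ yields $\exp(-\pi\lambda_{\rm HD}r_{\rm d}^{2}\sqrt{\beta}\,\arctan\sqrt{\beta})$, the stated closed form (and likewise for FD). I expect the genuine obstacle to be \emph{not} the bookkeeping of the first step but the limiting argument of the second: making precise the sense in which the finite, truncated, conditionally-Poisson expression of (17) converges to the PPP exponential, that is, verifying $\xi\to1$, that the conditioning washes out, and that the generating-function resummation is legitimate uniformly in the integration variable. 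Everything downstream of that is a routine integral.
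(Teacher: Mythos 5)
Your proposal is correct and follows the route the paper intends: the paper supplies no written proof of this corollary beyond the remark that it follows from (18)--(21) under the assumption on $W$ versus $\bar m$, and your specialization to $n=1$ (so $f_{\min}=0$, the $l$-sum collapses, the interior-ball integral disappears) followed by the Poisson generating-function resummation and the BPP-to-PPP passage is exactly the calculation being waved at. Three remarks on the details, all of which are points where you are right and the printed formulas are not. First, your resummation requires the exterior integral in (18) to carry the exponent $k-l$ (so that the $k$-sum sees $z^{k}$ and yields $e^{\mathcal Q(z-1)}$); the paper prints the exponent as $(n-l)$, which with $n=1$, $l=0$ would leave a single factor and no exponential --- you have implicitly corrected a typo, and you are also right that the exterior integral must be area-normalized for $z$ to be a probability, which the printed formula omits. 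Second, the condition making $\xi\to 1$ and washing out the conditioning is $\bar m\ll W$, not the paper's literal ``$W\ll\bar m$''; your reading is the one that makes the limit work. Third, your closed form for $\alpha=4$ evaluates to $\exp\big(-\pi\lambda_{\rm HD}r_{\rm d}^{2}\sqrt{\beta}\arctan(\sqrt{\beta})\big)$, whereas the corollary as printed has $\arctan(\beta)$; your answer is the correct one (and is consistent with the $\sqrt{\beta}\arctan\sqrt{\beta}$ appearing in Proposition~\ref{Proposition_coverage_probability_cellular}). You also correctly locate the only genuine mathematical content --- justifying the truncated-conditional-Poisson-to-PPP limit uniformly in $x$ --- which neither the paper nor the cited reference spells out here.
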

Now, the following theorem gives the average rate of the D2D mode.
\begin{theorem}
\label{Theorem_ergodic_rate_D2D}
The average rate of a typical {\rm HD-D2D} and {\rm FD-D2D} link can be calculated respectively as,
\begin{flalign}\label{Ergodic_rate_HD_D2D}
& \mathcal{R}_{\rm HD} =
\int_{0}^{\infty} \int_{0}^{\infty} \frac{2}{\Gamma(n)} \big(\pi \lambda_{\rm HD} \big)^n r_{\rm d}^{2n - 1} \mathcal{L}_{I_{\rm HD}}^{\rm HD}\bigg(\frac{e^t - 1}{P_{\rm d} r_{\rm d}^{-\alpha}}\bigg) \notag & \\ & \times \mathcal{L}_{I_{\rm FD}}^{\rm HD}\bigg(\frac{e^t - 1}{P_{\rm d} r_{\rm d}^{-\alpha}}\bigg)  \exp\Big(-\pi \lambda_{\rm HD} r_{\rm d}^2 -(e^t - 1) r_{\rm d}^{\alpha} \frac{\sigma^2}{P_{\rm d}} \Big) {\rm d}t \text{ } {\rm d}r_{\rm d},
& \\ &
\mathcal{R}_{\rm FD} =
\int_{0}^{\infty} \int_{0}^{\infty}  \exp\Big(-\pi \lambda_{\rm FD} r_{\rm d}^2 -(e^t - 1) r_{\rm d}^{\alpha} (\frac{\sigma^2}{P_{\rm d}} +  \Delta) \Big) \times \notag & \\ & \mathcal{L}_{I_{\rm HD}}^{\rm FD}\bigg(\frac{e^t - 1}{P_{\rm d} r_{\rm d}^{-\alpha}}\bigg) \notag \mathcal{L}_{I_{\rm FD}}^{\rm FD}\bigg(\frac{e^t - 1}{P_{\rm d} r_{\rm d}^{-\alpha}}\bigg) \frac{2}{\Gamma(n)} \big(\pi \lambda_{\rm FD} \big)^n r_{\rm d}^{2n - 1}{\rm d}t \text{ } {\rm d}r_{\rm d}.
\end{flalign}
\end{theorem}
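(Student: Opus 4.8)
The plan is to obtain the rate directly from the coverage probability already established in Theorem~\ref{Theorem_coverage_probability_D2D}, using the standard tail-integral representation of the logarithm, and thereby mirror the cellular-rate computation of Proposition~\ref{Propisition_ergodic_rate_cellular}. Starting from the definition~\eqref{Ergodic_rate_definition}, I would condition on the link distance $r_{\rm d}$ and treat $Y=\ln(1+\text{SINR}_{\rm d}(r_{\rm d}))$ as a nonnegative random variable, so that the inner expectation over the fading and the interference field is expressed through the layer-cake identity $\mathbb{E}[Y]=\int_0^\infty \mathbb{P}[Y>t]\,{\rm d}t$. Since $\ln(1+\text{SINR})>t$ is equivalent to $\text{SINR}>e^t-1$, this rewrites the inner expectation as $\int_0^\infty \mathbb{P}[\text{SINR}_{\rm d}(r_{\rm d})>e^t-1]\,{\rm d}t$.

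The second step is to recognize the integrand $\mathbb{P}[\text{SINR}_{\rm d}(r_{\rm d})>e^t-1]$ as the conditional (fixed-$r_{\rm d}$) coverage probability from Theorem~\ref{Theorem_coverage_probability_D2D} with the target threshold $\beta$ replaced by $e^t-1$. For the HD case this conditional probability is
\[
\mathcal{L}_{I_{\rm HD}}^{\rm HD}\Big(\tfrac{\beta}{P_{\rm d} r_{\rm d}^{-\alpha}}\Big)\,\mathcal{L}_{I_{\rm FD}}^{\rm HD}\Big(\tfrac{\beta}{P_{\rm d} r_{\rm d}^{-\alpha}}\Big)\,\exp\Big(-\tfrac{\beta\sigma^2 r_{\rm d}^{\alpha}}{P_{\rm d}}\Big),
\]
i.e. the integrand of~\eqref{Coverage_probability_HD_D2D} stripped of the distance-PDF prefactor; substituting $\beta=e^t-1$ produces exactly the $t$-dependent factors appearing in~\eqref{Ergodic_rate_HD_D2D}. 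I would then take the outer expectation over the $n^{\rm th}$-nearest-neighbour distance with its PDF $g_{(n)}$ from~\eqref{PDF_r_d} evaluated at density $\lambda_{\rm HD}$, which supplies the prefactor $\frac{2}{\Gamma(n)}(\pi\lambda_{\rm HD})^n r_{\rm d}^{2n-1}$ together with the $\exp(-\pi\lambda_{\rm HD}r_{\rm d}^2)$ term, and swap the order of the $t$- and $r_{\rm d}$-integrals to reach the stated double integral.

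The FD rate $\mathcal{R}_{\rm FD}$ follows from the identical argument applied to the FD coverage probability~\eqref{Coverage_probability_FD_D2D}: the residual self-interference there contributes the extra factor $\exp(-\beta\Delta r_{\rm d}^{\alpha})$, which under $\beta=e^t-1$ becomes the additional $-(e^t-1)r_{\rm d}^{\alpha}\Delta$ term inside the exponent, combining with the noise term to give $-(e^t-1)r_{\rm d}^{\alpha}(\sigma^2/P_{\rm d}+\Delta)$. The derivation is essentially mechanical once Theorem~\ref{Theorem_coverage_probability_D2D} is in hand; the only technical point requiring care is the interchange of the two integrations, which is justified by Tonelli's theorem since every integrand is nonnegative. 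I do not expect a substantive obstacle here, because the genuinely difficult work---characterizing the Laplace transforms of the HD and FD interference under Assumptions~1 and~\ref{assump1}---has already been carried out in proving the coverage probability, and the rate is recovered from it purely by the $\beta=e^t-1$ substitution and one additional integration over $t$.
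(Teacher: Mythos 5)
Your proposal is correct and follows exactly the route the paper intends: the paper omits an explicit proof of Theorem~\ref{Theorem_ergodic_rate_D2D}, but its displayed formulas are precisely the conditional coverage integrands of Theorem~\ref{Theorem_coverage_probability_D2D} with $\beta$ replaced by $e^t-1$ and integrated over $t$, which is the same layer-cake argument the paper writes out for the cellular rate in Proposition~\ref{Propisition_ergodic_rate_cellular}. No gaps; the Tonelli justification for swapping the $t$- and $r_{\rm d}$-integrals is the only technical point and you have handled it.
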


\begin{table}
	\caption{Simulation Parameters}
	\label{Parameters_table}
	\resizebox{\columnwidth}{!}{
	\begin{tabular}{|l|l|l|l|}
		\hline
		Parameter & Value & Parameter & Value\\
		\hline \hline
		$\lambda_{\rm b}$ & $[10^{-7}, 10^{-6}]$ \text{ BS}$/m^2$ & $\lambda_{\rm u}$ & $0.1$ \text{ UEs}$/m^2$ \\
		\hline
		$\gamma$ & $0$ \text{ dBm} & $\sigma ^2$ & $-96 \text{ dBm/Hz}$ \\
		\hline
		$P_{\rm b}$ & $40$ \text{ dB} & $P_{\rm d}$ & $23$ \text{ dBm}\\
		\hline
		$\alpha$ & $4$ & $\Delta$ & $10^{-5}$ \text{ dB} \\
		\hline
	\end{tabular}
}
\end{table}

\section{Numerical Results}
\label{results}
In this section, we validate the analytical results and evaluate the performance of our proposed system. The summary of the simulation parameters are shown in Table \ref{Parameters_table}. Fig. \ref{Cov_a} demonstrates the coverage probability for the cellular mode for different values of $\lambda_{\rm b}$. For different values of $\lambda_{\rm b}$, the coverage probability of cellular link does not vary. This is because each UE connects to its nearest BS after it meets the condition of BS association. Since by changing the density of BSs, the distance to the nearest BS changes as well, the achieved SINR remains fixed. Fig. \ref{Cov_b} depicts the coverage probability for D2D mode. In comparison to \cite{K}, in which the coverage probability of FD-D2D is less than HD, in our case the coverage probability of FD-D2D is more than HD-D2D case. This is due to exploiting the selection of $n^{\rm{th}}$-nearest neighbor as D2D pair. In the nearest neighbor case, even though the number of interfering nodes increases as the density of UEs grows, the distance between the transmitter and receiver becomes shorter. We can see the validity of this claim in both simulation and analytic results. It should be noted that the bias factor $k$ plays no role either in cellular coverage probability or in that of D2D links. Since in cellular case, we have considered downlink transmissions, the interfering nodes are only other BSs, thus, parameter $k$, whose effect is limited to the density of UEs, has no influence on cellular coverage probability. Any modification in density of UEs in both FD and HD also changes the distance between transmitter and receiver. This leads to sustaining the experienced SINR. Fig. \ref{Th} illustrates the system sum throughput as a function of the bias factor $k$. Total throughput in each mode is defined as $\lambda_{\mathcal{X}} \mathcal{R}_{\mathcal{X}}$. Similar to coverage probability, the average rate of the cellular and D2D links does not vary by altering the bias factor $k$ since the level of SINR remains stable. However, system sum throughput changes for both cellular and D2D modes. As mentioned before, the higher $k$ values, the less number of UEs choose D2D link, i.e. more UEs select cellular mode. As a result, trend of the system sum throughput for cellular and D2D links is reverse of one another. Although the total throughput of each mode is affected by $k$, the system sum throughput which is defined by $\mathcal{T} = \lambda_{\rm c} \mathcal{R}_c + \lambda_{\rm d} \mathcal{R}_d$ remains stable for a network consisted of both cellular and HD-D2D links, as shown in Fig. \ref{Th}. We can observe that the metric for choosing suitable bias factor $k$, could be influenced by considering other target performance goals, like system load on BSs rather than system sum throughput. However, in contrast to HD-D2D links, bias factor $k$ determines the system sum throughput in a cellular network including FD-D2D links. As can be seen in Fig. \ref{Th}, setting $k=0$, i.e. forcing FD-D2D mode for all UEs, results in maximum system sum throughput. The reason is that in FD mode, both UEs in a pair can transmit simultaneously, thereby the throughput would be nearly doubled and system sum throughput increases. In case of a pure D2D network, FD mode always outperforms its counterpart HD mode in terms of throughput. From theoretical point of view, throughput of the FD system can be doubled, providing that SI is entirely canceled out. However, in practice, it is difficult or even impossible to cancel the SI perfectly. 

\begin{figure}[h]
	\centering
	\includegraphics[width=0.5 \textwidth]{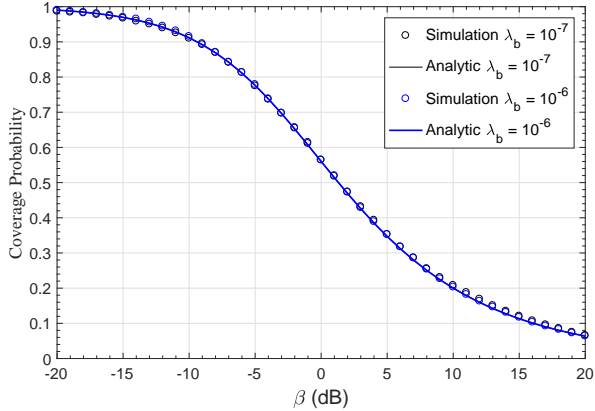}
	\caption{Coverage probability versus SINR threshold $\beta$ for cellular mode}{\vspace{-5mm}}
	\label{Cov_a}
\end{figure}

\begin{figure}[h]
	\centering
	\includegraphics[width=0.5 \textwidth]{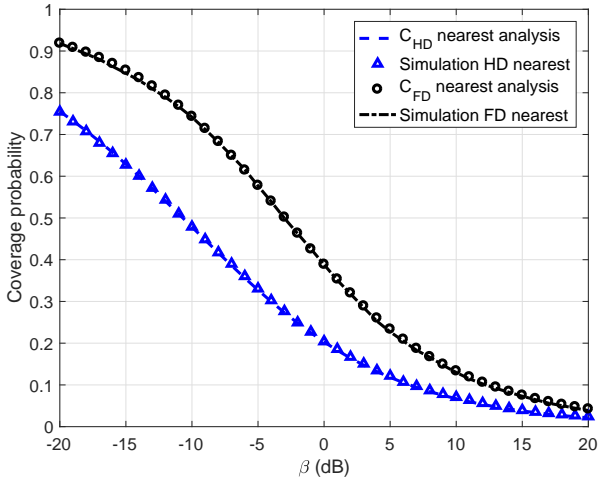}
	\caption{Coverage probability versus SINR threshold $\beta$ for D2D mode.}{\vspace{-5mm}}
	\label{Cov_b}
\end{figure}
\begin{figure}[h]
	\centering
	\includegraphics[width=0.5 \textwidth]{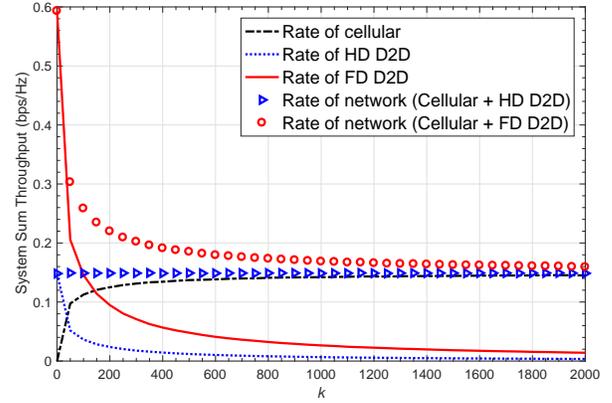}
	\caption{System sum throughput versus bias factor $k$, for different operating modes.}
	\label{Th}
\end{figure}

\section{Conclusion}
\label{Conclusion}
In this work, we proposed joint mode selection and pairing in a mixed cellular and D2D network. For the D2D mode, both HD and FD cases are analyzed. From the analytical results, several observations could be informative for system design. First, although changing the value of the bias factor leads to different densities of interferers, it does not play any role in coverage probability of D2D links due to the $n^{\rm{th}}$-nearest neighbor metric chosen for pairing. In addition, given a network with both HD and FD capabilities, we showed that unlike \cite{K}, FD coverage probability is more than that of HD. Second, in an overlay downlink cellular-HD D2D network, the total system throughput is invariant to the ratio of cellular and HD-D2D links. Precisely speaking, the values of bias factor affects only the amount of cellular or D2D links throughput individually and not the aggregate rate of them. Third, in contrast to HD-D2D links, choosing an appropriate value for the bias factor in a way that contributes to full FD-D2D network yields maximum total system throughput.

\appendices

\end{document}